%% file: final_draft.tex
\PassOptionsToPackage{table}{xcolor} 
\documentclass[sigconf, nonacm]{acmart}

\usepackage{xspace}
\usepackage{tabularx}
\usepackage{enumitem}
\usepackage{mathtools}
\usepackage{comment}
\usepackage{amsmath}
\usepackage{amsthm}
\usepackage{amsfonts}
\usepackage{graphicx}
\usepackage{subcaption}
\usepackage{etoolbox}
\usepackage{booktabs} 
\usepackage{hyperref}
\usepackage{algorithm, float, setspace}
\usepackage{algorithmic}
\usepackage{relsize}
\usepackage{natbib}
\usepackage{color}
\usepackage{makecell}

\definecolor{darkblue}{rgb}{0.0,0.0,0.5}
\hypersetup{colorlinks=true,breaklinks=true,bookmarks=true,urlcolor=darkblue,citecolor=darkblue,linkcolor=darkblue,bookmarksopen=false,draft=false}

\usepackage{xcolor} 
\newcommand\xtw[1]{\textcolor{blue}{[xw: #1]}}
\newcommand\gs[1]{\textcolor{red}{[gs: #1]}}

\input{macros.tex}

\AtBeginDocument{%
  }

\copyrightyear{2026}
\acmYear{2026}
\setcopyright{cc}
\setcctype{by}
\acmConference[WWW '26]{Proceedings of the ACM Web Conference 2026}{April 13--17, 2026}{Dubai, United Arab Emirates}
\acmBooktitle{Proceedings of the ACM Web Conference 2026 (WWW '26), April 13--17, 2026, Dubai, United Arab Emirates}
\acmPrice{}
\acmDOI{10.1145/3774904.3792678}
\acmISBN{979-8-4007-2307-0/2026/04}

\begin{document}

\title{Understanding Strategic Platform Entry and Seller Exploration: \\ A Stackelberg Model}

\author{Garrett Seo}
 \affiliation{%
 	\institution{Rutgers University}
 	\city{New Brunswick, NJ}
 	\country{USA}}
 \email{garrett.seo@rutgers.edu}

\author{Xintong Wang}
 \affiliation{%
 	\institution{Rutgers University}
 	\city{New Brunswick, NJ}
 	\country{USA}}
 \email{xintong.wang@rutgers.edu}

\author{David C. Parkes}
 \affiliation{%
 	\institution{Harvard University}
 	\city{Cambridge, MA}
 	\country{USA}}
 \email{parkes@eecs.harvard.edu}

\thanks{This research is funded in part by Rutgers SAS Research Grant in Academic
Themes. 
Our code is available at \url{https://github.com/chailab-rutgers/platform-entry}.
}

\begin{abstract}
Online market platforms play an increasingly powerful role in the economy. 
An empirical phenomenon is that platforms, such as Amazon, Apple, and DoorDash, also enter their own marketplaces, imitating successful products developed by third-party sellers. 
We formulate a Stackelberg model, where the platform acts as the leader by committing to an {\em  entry policy}: when will it enter and compete on a product? 
We study this model  through a theoretical and computational framework. 
We begin with a single seller,  and consider different
kinds of  policies for  entry.
We characterize the seller's optimal explore-exploit strategy via a Gittins-index policy, and give an algorithm to compute the platform's optimal entry policy.
%
We then consider multiple sellers, to account for competition and information spillover. Here, the Gittins-index characterization fails, and we employ deep reinforcement learning to examine seller equilibrium behavior.
Our findings highlight the incentives that drive platform entry and seller innovation, consistent with empirical evidence from markets such as Amazon and Google Play, with implications for regulatory efforts to preserve innovation and market diversity.

\end{abstract}

\begin{CCSXML}
<ccs2012>
   <concept>
       <concept_id>10010147.10010178.10010219.10010220</concept_id>
       <concept_desc>Computing methodologies~Multi-agent systems</concept_desc>
       <concept_significance>500</concept_significance>
       </concept>
 </ccs2012>
\end{CCSXML}

\ccsdesc[500]{Computing methodologies~Multi-agent systems}


\keywords{Platform economy, Gittins index, reinforcement learning, agent-based modeling, multi-agent simulation}
  


\maketitle

\section{Introduction}
\label{sec:intro}
Modern digital platforms like Amazon, Apple's App Store, and DoorDash have become dominant intermediaries for economic transactions, with their success built on third-party vendors and developers (``sellers'') who bring diverse services to the market.
These platforms also increasingly operate as direct competitors to sellers, by launching their own products. 
Examples include AmazonBasics, Apple's own apps in its App store, and the ``DoorDash Kitchens'' which emerged after the pandemic-driven food delivery boom.
The digital era has intensified this competition, bringing new  advantages to platforms. With unparalleled access to data, platforms can identify and imitate popular products at lower risk and cost than sellers~\citep{Zhu18}.
Furthermore, platforms have the power to promote their own items in search and recommendations, and operate at a scale unmatched by typically resource-constrained sellers.

The impact of platform entry on products is highly contested. 
It may benefit consumers (buyers) with lower prices and better quality, but it may also discourage third-party innovation if sellers expect the platform to imitate their products and take their profits.

This contentious dynamic has also become a focus of global antitrust enforcement.
In a landmark 2023 lawsuit, the U.S.~Federal Trade Commission, together with 17 states, sued Amazon, alleging that the company uses non-public seller data to illegally maintain its monopoly power by targeting lucrative markets for its own private-label products \citep{ftc_amazon_2023}. 
This was followed in March 2024 by a U.S.~Department of Justice lawsuit against Apple for allegedly suppressing innovative apps and technologies that could weaken the iPhone's dominance \citep{doj_apple_2024}. 
The European Union's {\em Digital Markets Act} (DMA), which became fully applicable in 2024, restricts platforms from using non-public business data to gain a competitive advantage~\citep{dma_2022}.\looseness=-1



This landscape raises critical questions for both platform operators and regulators: 
\textit{How should a platform decide whether and when to enter a product market, in balancing its own commercial interests against the health of its seller ecosystem? 
Do a platform's private objectives align with social welfare, or is regulatory intervention necessary?} 

To answer these questions, we develop a game-theoretic model of the strategic interaction between a platform and its sellers.
We model the platform-seller relationship as a Stackelberg game, where the platform, acting as the leader, first decides its policy as to when to enter and compete on each of one or more products. This positioning as the leader in the game reflects a platform's commitment power (e.g., through published fees or policies and exclusivity clauses). 
The sellers, after observing the platform's policy, then decide on their product innovation and sales strategy.

\subsubsection*{Our Contributions.}
We first study a single-seller model, where the seller's decision-making is modeled as a costly search problem.
By successfully adapting the Gittins index to this setting, we derive a closed-form expression for the seller's value of exploring an untested product (``exploration value'') which incorporates the platform's entry policy, allowing for a precise characterization of the seller's optimal explore/exploit policy. 
We show that the platform's revenue is piecewise monotonic with respect to the entry time or fee that parametrizes its policy, 
and develop an algorithm to optimize the platform's policy by identifying a finite set of Pareto optimal ones. 
We demonstrate this algorithm on three kinds of platform policies: a {\em global entry policy} that applies a universal entry time across products, a {\em global entry combined with transaction fees policy}, and a {\em heterogeneous entry policy} for product-specific entry times. 
We then generalize our model to a multi-seller environment to reflect the complex interactions on real-world platforms. 
As direct analytical solutions become intractable, we develop a multi-agent simulator and use deep reinforcement learning to find approximate Nash equilibria in the sellers' game. 
%
This allows us to solve for the platform's optimal entry policy under a range of 
distinct, strategically motivated environments (``clustered'' and ``diverse''), analyzing how platform entry affects different market scenarios.

The single and multi-seller simulations reveal several insights on the platform's strategic role. 
We find that a rational platform relies on seller exploration to discover high-demand products, which often increases overall consumer welfare (or social welfare)
and encourages product exploration when viable alternative products exist, relative to no platform entry. 
The platform's optimal entry is market-dependent: in ``clustered markets'' with a dominant product, a rational platform encourages seller diversification. 
The optimal timing, however, can be sensitive to the product's risk profile, e.g., high-stakes products require a longer protection period to incentivize seller innovation. 
In ``diverse markets'' with specialized sellers, an aggressive (early) entry policy can be destructive, forcing sellers to abandon their niches and cluster around selling safer products, reducing overall market diversity and welfare. 
This dynamic compels a rational platform to commit to delay its entry, demonstrating that while its private incentives are not perfectly aligned with social welfare, they also prevent very bad outcomes.


These findings suggest that, for a sufficiently forward-looking platform, the objective of revenue maximization is largely aligned with broader measures of market ecosystem health, such as product diversity and consumer welfare. 
The challenge, though, is to find the market-specific, non-myopic policies that a platform can use to achieve this alignment. 
This work provides a framework for understanding this strategic trade-off, offering a methodology for platform designers to optimize for long-term growth, and a lens for regulators to assess the market impact of platform competition.


\subsection{Related Work}
\label{sec:background}
\subsubsection*{Empirical Studies of Platform Entry.}
There are several empirical work documenting platform entry and competition with their third-party sellers. Amazon, rather than avoiding competition, tends to enter product spaces of high demand that are offered by many vendors \cite{Zhu18}.
Such entry decision increases the likelihood of third-party sellers exiting the platform. 
In the mobile app market, the threat of platform entry can cause developers to divert innovation efforts towards niche product categories to establish early competitive positions~\citep{Wen19}. 
Our work gives a game-theoretic model to provide counterfactual analysis of the strategic incentives behind these empirical observations.
%
\subsubsection*{Economic Models of Hybrid Platforms.} 
Prior literature has developed economic models of hybrid platforms to analyze the trade-offs of their dual role~\citep{Anderson2022}, the impact of data regulations~\citep{Madsen2023}, and specific strategies like self-preferencing in search rankings \citep{Hagiu2022} and the use of vertical control mechanisms \citep{Kang2022EC}.
We contribute to this line of work by offering a computational model focused specifically on how a platform's entry policy affects seller innovation, product diversity, and market efficiency.

\subsubsection*{Contract Design for Exploration.} 
The seller's problem in our model can be viewed as a variant of costly sequential search, rooted in the classic ``Pandora's Box''  problem~\citep{Weitzman79} and multi-armed bandit frameworks~\citep{Gittins1979}. 
Related work studies how a principal can guide an agent's search using explicit contracts, such as linear payment schemes \citep{dutting2023} or formal delegation mechanisms \citep{Bechtel2022, hoefer2024contract,ivanov2024principalagentreinforcementlearningorchestrating}. 
In our model, the platform's entry policy provides an implicit contract, shaping the exploration behavior of strategic sellers.

\subsubsection*{RL for Economic Platform Design.} 
Our approach aligns with recent work using RL and simulation to design and understand complex economic systems, including dynamically setting reserve prices in auctions~\citep{ShenPLZQHGDLT20}, selling user impressions to advertisers~\citep{Tang17abc}, designing tax policies~\citep{Zheng2021},  optimizing user satisfaction for recommender systems~\citep{Chen2019,Zhan2021}, and designing sequential price mechanisms~\citep{BreroEGPR21}.
Consistent with some of this literature, we use a Stackelberg game to model the platform-user interaction, a framework previously applied to analyze strategic problems like collusion mitigation~\citep{Gianluca22} and platform fee-setting under market shocks~\citep{platform_shock23}.

\subsubsection*{Gittins Index.}
The multi-armed bandit (MAB) problem models a sequential decision problem where an agent balances exploring less well-understood options with exploiting better known ones to maximize cumulative reward. 
For a specific class of MAB problems, the Gittins index provides a provably optimal solution~\citep{gittins1974dynamic}. This powerful result allows a complex, multi-dimensional optimization problem to be decomposed into a series of independent calculations, one for each arm. At each decision point, the optimal policy is to simply choose the arm with the highest current Gittins index. We demonstrate the applicability of the Gittins index to the seller problem in the single-seller setting. 
We defer details on the definition and assumptions to Appendix~\ref{app:gindex}.

\section{A Single-Seller Model with Platform Entry}
%

\label{sec:single_seller_model}
We first consider the interaction between a platform and a single third-party seller, 
and analyze how the platform's policy influences the seller's exploratory behavior. 
%

\subsection{A Stackelberg Model of Platform Entry}
\subsubsection{Shared Product Space.}
We consider a set of $M$ potential products, each initially in an undeveloped (\sunexplore) state. 
The demand or reward for each product is unknown on the platform before being sold. 
We assume the platform only enters products after a seller has explored a product and revealed its demand (e.g., from public sales ranking of products).

\subsubsection{The Seller's Problem.}
To explore an undeveloped product $j$, the seller incurs a one-time innovation cost $c_j$. 
Upon exploration, the product's demand is revealed: it transitions to a ``good'' state (\sgood) with reward $r^g_j$ and probability $p_j$ or a ``bad'' state (\sbad) with reward $r^b_j$ and probability $1-p_j$. 
We assume $r^g_j > r^b_j \ge 0$ and that the seller has accurate priors of $p_j$, $r^g_j$, and $r^b_j$, informed by sources such as market research, historical data, or similar products. 
These parameters capture essential differences in product types, reflecting variations in development costs and market positioning (e.g., luxury versus practical, niche versus mass market).
The innovation cost can also reflect a seller's expertise within a product domain. 
%

The seller has a capacity-constraint of offering one product at a time, and seeks to maximize their total expected discounted reward, given the platform's policy. 

\subsubsection{The Platform's Policy.}
The platform commits to a policy $\boldsymbol{\pi_p}$ that includes an \textit{entry-time parameter} $T_{p_j}$ as part of its strategic design. 
This parameter specifies the number of timesteps the platform waits before entering a developed product $j$ that has been revealed to be in its ``good'' state by the seller, transitioning product $j$ to an entered (\senter) state.
This reflects data patents, or empirical observations that platforms such as Amazon typically enter only after products demonstrate strong sales and positive reviews~\citep{Zhu18}. 
For simplicity, we assume that once the platform enters, it captures the entire reward stream $r^g_j$.\looseness=-1

The platform can further introduce a fraction parameter $\alpha$, representing the reward split between the platform and the seller. 
The parameter $\alpha$ can be interpreted as a \textit{transaction fee} that applies regardless of the product’s realized state. 

Given these components, in Section~\ref{sec:seller_policy}, we analyze three platform policy settings: a {\em global entry $T_p$ for all products}, a {\em global entry $T_p$ with transaction fee $\alpha$}, and a {\em heterogeneous entry $\mathbf{T_p} = (T_{p_1}, T_{p_2}, \dots, T_{p_M})$ allowing distinct entry times for each product}.\looseness=-1

\subsection{Seller's Optimal Policy Under $\boldsymbol{\pi_p}$}
Without platform entry, the seller faces a multi-arm bandit problem where the product opportunities are independent stochastic processes.
In this setting, the Gittins-index policy provides a provably optimal strategy~\citep{Gittins1979}.

We now introduce platform policy and denote the resulting problem by $\mathcal{M}$. The platform policy modifies the state-transition dynamics, so we model each product as a Markov chain with state space $S_j =\{\sunexplore, \sgood, \sbad, \senter\}$, representing the \textit{undeveloped, good, bad, and entered by platform} states, respectively. 
We  derive a closed-form, piece-wise expression for the Gittins index that incorporates the seller's optimal stopping decision in response to a platform entry $T_{p_j}$.
For each product $j$, we calculate the Gittins index by identifying the optimal stopping time. 
We consider the optimal stopping rule by partitioning the state space into a \textit{stopping set} $\mathcal{S}$ and a \textit{continuation set} $\mathcal{C}=S_j \setminus \mathcal{S}$. 
If the product is in a state $s \in \mathcal{C}$, the seller continues to sell $j$; if $s \in \mathcal{S}$, the seller stops selling $j$
In our platform entry model, it suffices to consider the following three different stopping rules, with their corresponding indices.

\begin{enumerate}[leftmargin=*]
    \item Stopping rule $\tau_1$: $\mathcal{S} = \{\sunexplore\}$. The seller does not explore.
    \item Stopping rule $\tau_2$: $\mathcal{S} = \{\senter\}$. The seller should continue to gain rewards from a bad state, and stop at platform entry.
    \item Stopping rule $\tau_3$: $\mathcal{S} = \{\sbad, \senter\}$. The seller stops if the product is revealed to be bad or if the platform enters. 
\end{enumerate}

We do not consider state $S_j = G$ to be in the stopping set as $r_j^g$ is the highest reward achievable in any given state. 
We defer the detailed closed-form derivation of these stopping-time indices under platform policy, $\boldsymbol{\pi_p}$ to Appendix~\ref{app:gindex_derivation_single_platform}.
%
Let $G_j^{(k)}(\sunexplore; \boldsymbol{\pi_p})$ denote the Gittins index associated with a specific stopping rule $\tau_k$ for the undeveloped product $j$. 
The Gittins index for product $j$ is the maximum value across all stopping rules:
\[
    G_j(\sunexplore; \boldsymbol{\pi_p}) = \max \left\{G_j^{(1)}(\sunexplore; \boldsymbol{\pi_p}), G_j^{(2)}(\sunexplore; \boldsymbol{\pi_p}), G_j^{(3)}(\sunexplore; \boldsymbol{\pi_p}) \right\}
\]

However, the optimality of the Gittins-index policy relies on independence across products, so we first check whether the independence among products is preserved, and under what conditions.


\begin{proposition}
    A platform's policy $\boldsymbol{\pi_p}$ preserves the products as independent stochastic processes, only if the seller acts optimally in response. 
    The seller's optimal policy is the Gittins-index policy. 
\end{proposition}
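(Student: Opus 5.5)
The plan is to reduce the problem $\mathcal{M}$ to a standard superprocess-free multi-armed bandit. Then the classical Gittins theorem~\citep{Gittins1979} applies arm by arm, with each product's index given by the maximum over the stopping rules $\tau_1,\tau_2,\tau_3$.

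\textbf{Step 1: the platform's policy acts as a per-product clock.} Under $\boldsymbol{\pi_p}$, the only coupling the platform can introduce is through the timing of entry. By construction, entry on product $j$ is triggered only after the seller has explored $j$ and revealed \sgood. It occurs $T_{p_j}$ steps later, and it depends only on $j$'s own history, not on the states of other products. I will therefore augment product $j$'s state with a counter $k\in\{0,\dots,T_{p_j}\}$ of elapsed periods in \sgood. This turns each product into a finite Markov chain on $\{\sunexplore,\sbad,\senter\}\cup\{(\sgood,k)\}$ whose transitions are governed by $p_j$ and the deterministic counter.

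\textbf{Step 2: the counter must be frozen when the arm is idle.} The Gittins framework requires arms that are not being played to remain frozen. The key observation concerns a seller acting optimally. Once product $j$ is in \sgood, it yields $r^g_j$, the highest reward available in any state of $j$. Moreover, the index of $(\sgood,k)$ is at least $r^g_j$ and dominates the index of every product state the seller could switch to, since switching only delays reward. So an optimal seller never leaves a \sgood\ product before entry. This matches the exclusion of \sgood\ from the stopping sets. Consequently, along optimal play the clock on $j$ only advances while $j$ is being played, and every idle arm is frozen in \sunexplore, \sbad, or \senter. For a suboptimal seller, by contrast, an abandoned good product would keep drifting toward \senter\ while unplayed, which breaks independence. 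This gives the ``only if'' direction of the statement.

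\textbf{Step 3: apply Gittins.} Along optimal play, the rewards $r_j^g$, $r_j^b$, $-c_j$, and $0$ depend only on each arm's own state. Each arm therefore evolves as an independent Markov reward process that changes only when played. The problem $\mathcal{M}$ restricted to such policies is then a classical MAB, and the Gittins-index policy is optimal within it. To remove the restriction, I will show that any policy abandoning a \sgood\ arm can be weakly improved by an interchange argument: swap the abandoned interval forward. Delaying play of the other arm costs nothing under discounting, because the other arm's value is at most $r^g_j/(1-\beta)$ per period. The improved policy lies in the restricted class. Finally, the index of \sunexplore\ is computed by optimizing over stopping sets. Since \sgood\ is never a stopping state, only $\tau_1,\tau_2,\tau_3$ remain, which gives $G_j(\sunexplore;\boldsymbol{\pi_p})$ as the stated maximum.

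The main obstacle will be making Step 2/3 rigorous: showing that restricting to ``never leave \sgood'' policies loses nothing. The difficulty is that a non-frozen idle arm places the problem outside Gittins' hypotheses. I would handle this with a direct pairwise interchange on the discounted reward stream.
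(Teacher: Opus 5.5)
\paragraph{Gap in Steps 2--3.} The gap is in Steps 2--3, where you expected the difficulty, and the repair you propose does not work.

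Two of your claims are false in general. The first is that the index of $(\sgood,k)$ ``dominates the index of every product state the seller could switch to.'' The second is the interchange bound, ``the other arm's value is at most $r^g_j/(1-\beta)$ per period.'' Nothing stops another product from having a larger reward or a larger undeveloped index.

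Take $\gamma=0.9$ and two products with $p=1$, $c=0$, $T_p=10$, $r^g_j=1$, $r^g_{j'}=100$. Consider the policy that explores $j$ at $t=0$ and then abandons it for $j'$. It earns about $1+0.9\cdot 100\,(1-0.9^{10})/0.1\approx 587$. Swapping $j$'s remaining good periods forward, as you propose, earns about $(1-0.9^{10})/0.1+0.9^{10}\cdot 651\approx 234$.

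So a policy that leaves a good arm cannot in general be weakly improved by a pairwise interchange into the ``never leave $\sgood$'' class. Moreover, from such a state an optimal seller in $\mathcal{M}$ \emph{does} leave the good product: at $t=1$, $j$ has index $1$ while $j'$ is undeveloped with index $100$. The better policy here explores $j'$ first, which is a global reordering, not your swap.

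Relatedly, ``$\mathcal{M}$ restricted to such policies is a classical MAB'' is not meaningful as stated. Gittins' theorem gives optimality over \emph{all} policies of a rested bandit, not over a restricted policy class of a non-rested one.

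\paragraph{Missing idea: the rested relaxation.} The paper's route is to use the rested version $\mathcal{M}'$, where a good product's counter freezes while idle, as a relaxation instead of arguing policy-by-policy inside $\mathcal{M}$.

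Couple the two problems under the same actions and the same exploration outcomes; the $\mathcal{M}$-state is computable from history, so this is a valid policy in $\mathcal{M}'$. Each good product's counter in $\mathcal{M}'$ is then never below its counter in $\mathcal{M}$, so every per-period reward weakly dominates. Hence $\sup_\pi V^{\mathcal{M}}_\pi \le \sup_\pi V^{\mathcal{M}'}_\pi = V^{\mathcal{M}'}_{\pi^*}$ by Gittins' theorem on $\mathcal{M}'$.

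It then remains to show $V^{\mathcal{M}}_{\pi^*}=V^{\mathcal{M}'}_{\pi^*}$, i.e.\ that the Gittins policy never leaves a good arm before entry. This is not a universal index comparison. It is a property of $\pi^*$'s own trajectory from the all-undeveloped start:
\begin{enumerate}
\item $j$ was chosen in state $\sunexplore$ because $G_j(\sunexplore)$ was maximal;
\item its index rises to $r^g_j\ge G_j(\sunexplore)$ once it is good;
\item all other arms are frozen while $j$ is played, so $j$ stays the argmax until entry.
\end{enumerate}
This sandwich argument makes the interchange step unnecessary.
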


\begin{proof}
It suffices to prove the claim for heterogeneous entry times 
$\mathbf{T}_p$. A global policy $T_p$ corresponds to the special case 
$T_{p_j}=T_p$ for all products $j$, while a transaction fee $\alpha$ 
uniformly rescales rewards by $(1-\alpha)$. Hence the results for 
$T_p$ and $(T_p,\alpha)$ follow directly.
The proof proceeds in four steps:

(1) Construct a rested bandit $\mathcal{M}'$ from the current version $\mathcal{M}$.

(2) Show $V_\pi^{\mathcal{M}'} \ge V_\pi^{\mathcal{M}}$ for any seller policy $\pi$.

(3) Characterize a set of seller policies ${\pi'}$ where $V_\pi^{\mathcal{M}'} = V_\pi^{\mathcal{M}}$.

(4) Show the Gittins-index policy belongs to this set.

\paragraph{Step 1 (Rested Conversion $\mathcal{M'}$).} The seller's problem $\mathcal{M}$ is described by a market state $\mathbf{x}(t) = (\mathbf{S}, \mathbf{C})$. $\mathbf{S}$ denotes the product state vector, where each $S_j$ takes values in the state space defined previously. $\mathbf{C}$ denotes the product counter vector where each $C_j \in \{\emptyset, T_{p_j}, T_{p_j} -1, \dots, 2, 1, 0\}$ tracks the time left before the platform enters product $j$. The counter is inactive ($C_j =\emptyset$) when product $j$ is in state $\sunexplore$ or $\sbad$, and equals $0$ when the product has entered state $\senter$. 

$\mathcal{M}$ is not made up of independent stochastic products. In particular, suppose product $j$ is in state
\[
x_j(t) = (\sgood, C_j),
\]
meaning it is in its good state and the platform will enter after $C_j$ timesteps.
If the seller instead sells a different product $j' \neq j$ at time $t$, then the counter for product $j$ still decreases, so its state updates to
\[
x_j(t+1) =
\begin{cases}
(\sgood, C_j-1), & \text{if } C_j>1,\\
(\senter, 0), & \text{if } C_j=1.
\end{cases}
\]
Thus, the state of product $j$ changes despite selling another product $j'$.

To recover independence, we define a \emph{rested} version of the problem, denoted $\mathcal{M}'$. $\mathcal{M}'$ matches $\mathcal{M}$ in all state transitions, except for state $\sgood$. 
Specifically, in $\mathcal{M}'$, if product $j$ is in state $x_j(t)=(\sgood, C_j)$ and the seller chooses some other product $j'\neq j$, then product $j$ remains unchanged:
\[
x_j(t+1)= (\sgood, C_j).
\]
However, if the seller chooses product $j$ at time $t$, then we keep the original transition:
\[
x_j(t+1) =
\begin{cases}
(\sgood, C_j-1), & \text{if } C_j>1,\\
(\senter, 0), & \text{if } C_j=1.
\end{cases}
\]

\paragraph{Step 2 (Dominance of $\mathcal{M'}$).} We now show that the seller always makes more utility in $\mathcal{M}'$. We define a trajectory
\[\tau = (\mathbf{x}(0), a(0), \mathbf{x}(1), a(1), \dots)\]
which is the sequence of market states and actions generated in $\mathcal{M}'$ or $\mathcal{M}$. For a seller policy $\pi$, we define the expected discounted utility from initial state $\mathbf{x}(0)$ as
\[
V_\pi\big(\mathbf{x}(0)\big)
= \mathbb{E}_{\tau \sim p_\pi\big(\cdot \mid \mathbf{x}(0)\big)}
\left[\sum_{t=0}^{\infty} \gamma^t \, R\big(\mathbf{x}(t), \pi(\mathbf{x}(t)\big)\right]
.
\]
Here, $\gamma \in (0,1)$ is the discount factor and $R\big(\mathbf{x}(t),a(t)\big)$ is the reward at time $t$. 

Further, the trajectory distribution $p_\pi$ is defined as: 
\[
p_\pi(\tau \mid \mathbf{x}(0))
= \prod_{t=0}^{\infty} \pi\big(a(t) \mid \mathbf{x}(t)\big)\, \mathcal{T}\big(\mathbf{x}(t+1) \mid \mathbf{x}(t), a(t)\big),
\]
where $\mathcal{T}$ denotes the transition probability function, and $p_\pi$ describes the probability of the seller playing some trajectory $\tau$ under policy $\pi$.  

To show that the expected reward of the seller in $\mathcal{M'}$ is greater than or equal to in $\mathcal{M}$, or that $V_\pi^\mathcal{M'}\big(\mathbf{x}(0)\big) \geq V_\pi^\mathcal{M}\big(\mathbf{x}(0)\big)$, we show that the reward from any given trajectory in $\mathcal{M'}$ is greater than that in $\mathcal{M}$. 

Fix an arbitrary trajectory $\tau$, and consider any time $t$ along this trajectory with state-action pair $\big(\mathbf{x}^\mathcal{M'}(t),a(t)\big)$ and $\big(\mathbf{x}^\mathcal{M}(t),a(t)\big)$ for $\mathcal{M'}$ and $\mathcal{M}$ respectively.
We claim that, for every action $a(t)$ in the sequence,
\[
R\big(\mathbf{x}^\mathcal{M'}(t),a(t)\big) \;\ge\; R\big(\mathbf{x}^{\mathcal{M}}(t),a(t)\big).
\]

\noindent We compare rewards across actions.

\noindent\textbf{Action $a(t)=0$.}
No product is sold, and the reward is $0$ in both problems:
\[
R\big(\mathbf{x}^{\mathcal{M}'}(t),a(t)\big)
=
R\big(\mathbf{x}^{\mathcal{M}}(t),a(t)\big)
=0.
\]

\noindent\textbf{Action $a(t)=j$.}
The seller sells product $j$ at time $t$. We distinguish three product states. 

\noindent\emph{Undeveloped.}
If $x^{\mathcal{M}'}_{j}(t) = (\sunexplore,\emptyset)$, then necessarily
$x^{\mathcal{M}}_{j}(t) = (\sunexplore,\emptyset)$ as well. Therefore,
\[
R\big(\mathbf{x}^{\mathcal{M}'}(t),a(t)\big)
=
R\big(\mathbf{x}^{\mathcal{M}}(t),a(t)\big)
= -c_j + 
\begin{cases}
r^g_j, & \text{w.p. } p_j,\\
r^b_j, & \text{w.p.  } 1-p_j.
\end{cases}
\]

\noindent\emph{Bad.}
If $x^{\mathcal{M}'}_{j}(t) = (\sbad,\emptyset)$, then
$x^{\mathcal{M}}_{j}(t) = (\sbad,\emptyset)$ as well, and hence
\[
R\big(\mathbf{x}^{\mathcal{M}'}(t),a(t)\big)
=
R\big(\mathbf{x}^{\mathcal{M}}(t),a(t)\big)
=
r^b_j.
\]

\noindent\emph{Good.}
Suppose $x^{\mathcal{M}'}_{j}(t) = (\sgood, C^{\mathcal{M}'}_{j})$.
Counters evolve differently across the two problems: in $\mathcal{M}'$
the counter decreases only when product $j$ is selected, whereas in
$\mathcal{M}$ it decreases whenever the product is in the good state.
Hence $C^{\mathcal{M}'}_{j} \ge C^{\mathcal{M}}_{j}$.

If $C^{\mathcal{M}}_{j}=0$, then
$x^{\mathcal{M}}_{j}(t)=(\senter,0)$ while
$x^{\mathcal{M}'}_{j}(t)=(\sgood,C^{\mathcal{M}'}_{j})$,
so
\[
R(\mathbf{x}^{\mathcal{M}'}(t),a(t))
= r^g_j
>
R(\mathbf{x}^{\mathcal{M}}(t),a(t))
=0 .
\]
Otherwise the product remains in state $\sgood$ in both $\mathcal{M}'$ and $\mathcal{M}$, and the rewards coincide,
yielding $R^{\mathcal M'} = R^{\mathcal M} = r^g_j$.

Across all actions, the reward gained by the seller in $\mathcal{M}'$ is greater than or equal to the reward in $\mathcal{M}$ at every time step along a trajectory $\tau$.
Because the two problems induce the same trajectory distribution under any fixed seller policy $\pi$ (i.e., $p^{\mathcal{M}'}_\pi(\tau \mid \mathbf{x}_0)=p^{\mathcal{M}}_\pi(\tau \mid \mathbf{x}_0)$), it follows that the expected discounted utility in $\mathcal{M}'$ is weakly higher, and $ V_\pi^{\mathcal{M}'}\big(\mathbf{x}(0)\big) \geq V_\pi^{\mathcal{M}}\big(\mathbf{x}(0)\big)$. 

\paragraph{Step 3 (Policies Achieving $V_\pi^{\mathcal{M}'} = V_\pi^{\mathcal{M}}$).} We now describe the set of policies $\{\pi'\}$ for which $V_\pi^{\mathcal{M}'}\big(\mathbf{x}(0)\big) = V_\pi^{\mathcal{M}}\big(\mathbf{x}(0)\big)$. Previously, we showed that $\mathcal{M'}$ can yield a strictly greater reward than $\mathcal{M}$ only when some product $j$ is in the good state $x^{\mathcal{M}'}_j(t) = (G, C_j^{\mathcal{M'}})$, since the counter satisfies $C_j^\mathcal{M'} \geq C_j^\mathcal{M}$. In order for both counters to be equivalent, we must have $\pi'$ satisfy that if $x_j(t) = (\sgood, C_j)$, then $\pi'\big(\mathbf{x}(t)\big) = j$. In this way, the counter decreases the same way in both problems, and we can interpret $\pi'$ as a seller policy that always continues to sell a product in its $\sgood$ state until entry.

Since $\mathcal{M}'$ is a MAB problem with independent stochastic arms, the Gittins-index policy $\pi^*$ is optimal ~\citep{weber-proof}.

\paragraph{Step 4 (Gittins Optimality).} To complete the proof, we must show that the Gittins-index policy $\pi^*$ belongs to the set of policies $\{\pi'\}$. In order for $\pi^* \in \{\pi'\}$, $\pi^*$ satisfies that if $x_j(t) = (\sgood, C_j)$, then $\pi^*\big(\mathbf{x}(t)\big) = j$. 

Suppose the seller follows $\pi^*$ and at time $t$ product $j$ is in the good state,
$x_j(t)=(\sgood,C_j)$. Since all products are initially undeveloped in $\mathbf{x}(0)$,
there exists some $t'<t$ at which product $j$ was first selected, with
$x_j(t')=(\sunexplore,\emptyset)$. By the Gittins-index policy, $j$ maximized the index at
time $t'$, i.e.,
\[
G_j(\sunexplore;T_{p_j}) \ge G_{j'}
\quad \forall\, j'\neq j.
\]
Moreover, since
\[
G_j(\sgood;T_{p_j}) > G_j(\sunexplore;T_{p_j}),
\]
product $j$ continues to maximize the index at time $t$, implying
\[
\pi^*(\mathbf{x}(t)) = j.
\]
\end{proof}

The proof above highlights how platform policies shape optimal exploration incentives. In some market environments, however, exploration is effectively predetermined. We define a \emph{dominating product} $j^\text{dom}$ as a product whose undeveloped Gittins index exceeds that of any other product under any platform policy:
\[
G_{j^{\mathrm{dom}}}(U;\boldsymbol{\pi}_p)
>
G_j(U;\boldsymbol{\pi}_p)
\quad
\forall j \neq j^{\mathrm{dom}},\;
\forall \boldsymbol{\pi}_p .
\]

In experiments, we focus on market environments without dominating products, since sellers would always explore $j^\text{dom}$ first, limiting the influence of platform policies on exploration. 

\subsection{Finding the Optimal Platform Policy}
\label{sec:seller_policy}
The platform's objective is to choose the optimal $\boldsymbol{\pi^*_p}$, that maximizes its own expected discounted utility, anticipating the seller's optimal response. 
Instead of exhaustively searching over an infinite set of platform policies $\boldsymbol{\pi_p}$, we show that the policy space can be partitioned into regions, each with a distinct optimal seller strategy. 
We further find that the platform's optimal policy lies among a finite set of candidate points across these regions.

We first show that the platform's utility, $u_p$, is a piecewise function of its policy $\boldsymbol{\pi_p}$, with branches defined by the optimal seller strategy $\pi^*_s$ that follows from $\boldsymbol{\pi_p}$. 
For this section, we simplify the market state as $\mathbf{x} = (\mathbf{S}, t)$, where $\mathbf{S}$ is the product state vector with each $S_j \in \{\sunexplore, \sbad, \sgood, \senter\}$ indicating the state of product $j$, and $t$ denotes the timestep. 
The utility for the platform can be described recursively:

\[
\begin{aligned}
U_p(\mathbf{S},t;\boldsymbol{\pi}_P) =
\begin{cases}
0, & j^* = \varnothing, \\[4pt]
F_{\text{bad-cont}}(r^b_{j^*},t,\boldsymbol{\pi}_P),
  & S_{j^*} = \sbad, \\[4pt]
\begin{aligned}
 &p_{j^*}^b\!\Big(F_{\text{bad}}(r^b_{j^*},t,\boldsymbol{\pi}_P) \\
 &\qquad + U_p(\mathbf{S}^{(S_{j^*}\leftarrow \sbad)},t+1;\boldsymbol{\pi}_P)\Big) \\
 &+ p_{j^*}^g\!\Big(F_{\text{good}}(r^g_{j^*},t,\boldsymbol{\pi}_P) \\
 &\qquad + U_p(\mathbf{S}^{(S_{j^*}\leftarrow \senter)},t+T_{p_j};\boldsymbol{\pi}_P)\Big)
\end{aligned}
 & S_{j^*} = \sunexplore
\end{cases}
\end{aligned}
\]
where $j^* = \pi^*_s(\mathbf{S},t, \boldsymbol{\pi}_P)
     = \arg\max_j G_j(S_j, t;\boldsymbol{\pi}_P)$, or equivalently, the best product to sell at state $(\mathbf{S}, t)$ by the Gittins index.

$F_{\text{bad-cont}}$, $F_{\text{bad}}$, and $F_{\text{good}}$ are closed-form expressions of the platform's reward that depend on the policy setting. 
$F_{\text{bad-cont}}$ represents the ongoing reward that the platform obtains when the seller continues to sell a product that is realized in its bad state for the remainder of the game.
$F_{\text{bad}}$ and $F_{\text{good}}$ correspond to the reward the platform receives when the seller's product transitions to its bad or good state, respectively. 


We identify where the seller’s optimal policy may change with $\boldsymbol{\pi_p}$ by defining a boundary set $\mathcal{B}$, made up of three types:

\begin{enumerate}[leftmargin=*]
    \item \textit{Zero boundary}: $b_j^0 = G_j(\sunexplore; \boldsymbol{\pi_p}) = 0$. 
    The seller is indifferent between exploring and not exploring product $j$. 
    \item \textit{Bad-indifference boundary}: $b_{j, j'}^{B} = G_j(\sbad; \boldsymbol{\pi_p}) - G_{j'}(\sunexplore; \boldsymbol{\pi_p}) = 0$. The seller is indifferent between selling product $j$, which has been realized in its bad state, and exploring an undeveloped product $j'$. 
    \item \textit{Undeveloped-indifference boundary}: $b_{j, j'}^\sunexplore = G_j(U; \boldsymbol{\pi_p}) - G_{j'}(U; \boldsymbol{\pi_p})= 0$. The seller is indifferent between exploring product $j$ and $j'$.
\end{enumerate}
Each boundary $b \in \mathcal{B}$ partitions the policy space $\boldsymbol{\pi_p}$ into the sets $\{\boldsymbol{\pi_p}: b(\boldsymbol{\pi_p}) > 0\}$ and $\{\boldsymbol{\pi_p}: b(\boldsymbol{\pi_p}) < 0\}$. 
We defined a \textit{region} as the non-empty intersection of such sets across all boundaries $b \in \mathcal{B}$, representing a subset of the platform policy space where the seller's optimal choice of $j^*$ remains identical for a given market state $(\mathbf{S}, t)$. Thus, each piece of $u_p(\boldsymbol{\pi_p})$ corresponds to a region with a fixed seller strategy.

Let $\mathcal{R}(\mathcal{B}) = \{R_1, R_2, ..., R_k\}$ be the collection of all nonempty regions induced by $\mathcal{B}$, where the seller strategy remains fixed for all $\boldsymbol{\pi_p} \in R_i$. We can now define $u_p$ over all branches $R_i \in \mathcal{R}(\mathcal{B})$ as: 
\[
u_p(\boldsymbol{\pi_p}) =
\begin{cases}
u_p(\boldsymbol{\pi_p} | R_1) = U_p(\mathbf{S}^U, 0; R_1), & \text{if } \boldsymbol{\pi_p} \in R_1, \\[0.6em]
\;\;\vdots & \;\;\vdots \\[0.6em]
u_p(\boldsymbol{\pi_p} | R_k) = U_p(\mathbf{S}^U, 0; R_k), & \text{if } \boldsymbol{\pi_p} \in R_k,
\end{cases}
\quad R_i \in \mathcal{R}(\mathcal{B}).
\]

To maximize $u_p$, we maximize over all regions. 
Naively, this can be done by iterating through all possible policies within a region. 
Instead, we show that for every region $R_i \in \mathcal{R}(\mathcal{B})$, $u_p(\boldsymbol{\pi_p} |R_i)$ is monotone with respect to each component of $\boldsymbol{\pi_p}$. 
Thus, maximizing a region reduces to maximizing over a set of \textit{Pareto optimal} platform strategies within the region for which no other strategy improves all monotone components simultaneously. 

Given a policy setting, every monotonically increasing component (e.g., transaction fee) is bounded above and every monotonically decreasing component (e.g., entry time) is bounded below within $R_i$, so the set of Pareto optimal strategies is finite.
Below we analyze three platform policy settings.

\subsubsection{Global Entry}
A global entry policy is defined with a single platform entry time $\boldsymbol{\pi_p} = T_p \in \mathbb{N}$, the same for all products. 
Denote $\gamma$ as the platform discount factor. 
The closed-form functions $F$, representing the platform's reward, are: 
\[
\begin{aligned}
F_{\text{bad-cont}}(r^b_{j},t,\boldsymbol{\pi}_P=T_p) &=
F_{\text{bad}}(r^b_{j},t,\boldsymbol{\pi}_P=T_p) = 0, \\
F_{\text{good}}(r^g_{j},t,\boldsymbol{\pi}_P=T_p) &= r_{j}^g \frac{\gamma^{t+T_p}}{1-\gamma} . 
\end{aligned}
\]

$F_{\text{good}}$ increases as $T_p$ decreases, 
\[F_{\text{good}}(r^g_{j}, t, T_{p} + 1) < F_{\text{good}}(r^g_{j}, t, T_{p}), \; \forall \; T_p \in \mathbb{N}.\]
Since $u_p(T_p)$ is a weighted sum of $F_{\text{good}}$, $u_p(T_p)$ decreases monotonically with $T_p$. Further, $T_p$ is bounded below, so every region $R_i$ can be maximized over a finite set of Pareto optimal points. Because each region $R_i$ is one-dimensional in $T_p$, identifying this set reduces to selecting the smallest $T_p$ in that region.

\subsubsection{Global Entry and Transaction Fee}
We extend the platform's policy $\boldsymbol{\pi_p} = (T_p, \alpha)$ to include a {\em transaction fee, $\alpha \in [0,1]$}. 
In this case, the closed-form functions $F$ are: 
\[
\begin{aligned}
F_{\text{bad-cont}}(r^b_{j},t,\boldsymbol{\pi}_P=(T_p, \alpha)) &= \alpha r_{j}^b \frac{\gamma^t}{1-\gamma},  \\[4pt]
F_{\text{bad}}(r^b_{j},t,\boldsymbol{\pi}_P=(T_p, \alpha))  &= \alpha r_{j}^b \gamma^t,  \\[4pt]
F_{\text{good}}(r^g_{j},t,\boldsymbol{\pi}_P=(T_p, \alpha)) &= \alpha r_{j}^g \frac{\gamma^t (1-\gamma^{T_p})}{1-\gamma} + r_{j}^g \frac{\gamma^{t+T_p}}{1-\gamma} . 
\end{aligned}
\]
All $F$ are linear and nonnegative in $\alpha$, and thus increase monotonically with $\alpha$. 
Meanwhile, $F_{\text{good}}$ remains decreasing in $T_p$:
\[
\begin{aligned}
F_{\text{good}}(T_p+1) - F_{\text{good}}(T_p) 
&= - r_{j}^g \frac{\gamma^{t+T_p} (1-\alpha)}{1-\gamma} \le 0.
\end{aligned}
\]
Since $\gamma \in (0,1)$, $r_{j}^g \ge 0$, and $\alpha \in [0,1]$, the difference is always nonpositive. 
Thus, $F_{\text{good}}(T_p+1) \le F_{\text{good}}(T_p)$ for all $T_p \in \mathbb{N}$, and $F_{\text{good}}$ decreases monotonically with $T_p$.

$u_p$ is a weighted sum of the $F$ functions, and therefore, increases monotonically with $\alpha$ and decreases monotonically with $T_p$ within each region $R_i$. As established, $T_p$ is bounded below, and $\alpha$ is also bounded above by $1$, so every region $R_i$ can be maximized over a finite set of Pareto optimal points. Formally, a policy $(T_p, \alpha)$ is Pareto optimal if there exists no other point $(T_p', \alpha')$ such that $T_p' \le T_p$ and $\alpha' \ge \alpha$ with at least one inequality strict.

\subsubsection{Heterogeneous Entry}
We consider a flexible entry policy $\boldsymbol{\pi_p} = \mathbf{T_p} = (T_{p_1}, T_{p_2}, \dots, T_{p_M}) \in \mathbb{N}^M$, where the platform can set {\em an independent entry policy $T_{p_j}$ for each product $j$}.  The closed-form functions F are: 
\[
\begin{aligned}
F_{\text{bad-cont}}(r^b_{j},t,\boldsymbol{\pi}_P=\mathbf{T_p}) &=
F_{\text{bad}}(r^b_{j},t,\boldsymbol{\pi}_P=\mathbf{T_p}) = 0, \\[4pt]
F_{\text{good}}(r^g_{j},t,\boldsymbol{\pi}_P=\mathbf{T_p}) &= r_{j}^g \frac{\gamma^{t+T_{p_j}}}{1-\gamma} . 
\end{aligned}
\]

Here, $F_{\text{good}}$ monotonically decreases with each $T_{p_j}$ and is bounded below, as shown in the global entry case. 
Thus, every region $R_i$ can be maximized over a finite set of Pareto optimal points. Formally, a policy $\mathbf{T_p} =(T_{p_1}, T_{p_2}, \dots, T_{p_M})$ is Pareto optimal if there exists no other $\mathbf{T'_p} = (T'_{p_1}, T'_{p_2}, \dots, T'_{p_M})$ such that $T_{p_j}' \le T_{p_j}$ for all $j$ with at least one inequality strict.


\subsubsection{Computation and Complexity} We summarize the algorithm for computing the optimal platform policy in Appendix \ref{app:platpolalg}. We provide intuition for the global entry and heterogeneous entry policy settings with toy examples in Appendix \ref{app:globaltoyex} and \ref{app:heterotoyex}. 

The number of regions in the policy space depends on the setting. For global entry and global entry with transaction fees, the number of regions grows polynomially in the number of products due to the seller indifference boundaries. 
However, heterogeneous entry creates an $M$-dimensional policy space, causing the number of regions to grow exponentially with the number of products. Monotonicity and boundedness allow us to focus only on Pareto-optimal strategies within each region, despite this exponential complexity.




\subsection{An Illustration of Platform Policy Choice}
In this section, we construct simple, representative market environments to study how a strategic platform may tailor its entry and fee policies according to market structure. 
We focus on identifying conditions under which platform incentives align with or diverge from seller exploration, and explore how regulatory interventions can mitigate platform's excessive profit extraction.

\begin{table}[h]
\centering
\caption{Type A represents a product with moderate, stable payoffs and a relatively low cost, whereas Type B a riskier product with the potential of higher reward but higher cost.}
\label{tab:ProductTypes}
\begin{tabular}{l@{\hspace{2.5em}}l@{\hspace{2.5em}}l@{\hspace{2.5em}}}
\hline\hline
 & \text{Type A} & \text{Type B} \\
\hline
Cost & 50 & 120 \\
Reward & 100, 50 & 200, 0 \\
Probability & 0.5, 0.5 & 0.2, 0.8 \\
\hline\hline
\end{tabular}
\end{table}

We use two types of products in Table~\ref{tab:ProductTypes} to construct environments reflecting different distributions of product opportunities:
\begin{itemize}[leftmargin=*]
    \item A market with three Type A and one Type B products (3A1B), 

    \item A market with one Type A and three Type B products (1A3B). 
    
    
\end{itemize}

Table~\ref{tab:utility_metrics_combined} presents the optimal platform policies and the associated utility and exploration metrics in the two markets. 
We set the seller’s discount factor to $\gamma_s = 0.9$, and assume a more forward-looking platform with $\gamma_p = 0.95$.
We highlight the following:



\begin{enumerate}[leftmargin=*,itemsep=0.2cm]
    \item \textit{Rational} platform entry encourages exploration. \\
    This is reflected in the consistent increase in the number of products explored relative to the no-entry case. 
    A rational platform avoids premature entry, since its profits remain partially aligned with seller exploration. 
    \item \text{Market composition shapes optimal platform behavior.}\\
    In markets dominated by safe products with predictable demand and low innovation costs (e.g., 3A1B), the platform optimally sets higher fees (40\%) to reliably capture a steady revenue stream. This scenario likely reflects many real-world markets, driven by stable demand and incremental innovation.
    
    By contrast, in markets where demand is less predictable and success depends on innovating riskier products (e.g., 1A3B), the platform benefits from setting lower transaction fees (8\%), better aligning its incentives with seller exploration and enabling it to imitate and monetize more new products.
    %
    \item High transaction fees reduce seller utility and exploration.\\
    In markets like 3A1B, the platform prioritizes profit extraction via higher fees over earlier entry to give sellers time to recoup from costs and explore later. 
    This suppresses early exploration and buyer utility. 
    Imposing caps on transaction fees limits the platform from extracting excessive profits while encouraging earlier entry, boosting product exploration and buyer utility.
    %
    \item Heterogeneous entry improves flexibility and buyer utility, but may hurt sellers.\\
    Allowing product-specific entry times enables the platform to imitate low-cost products earlier and high-cost products later, at the expense of seller profits.
    By placing a minimum entry barrier, we limit the platform from imitating too early, maintaining higher buyer utility while improving seller profit.   
\end{enumerate}

\begin{table*}[t]
\caption{Agent utility and seller exploration metrics on markets with different product compositions.}
\label{tab:utility_metrics_combined}
\centering
\footnotesize 
\begin{subtable}[t]{0.49\textwidth}
\centering
\caption{Environment 3A1B}
\label{tab:3Safe}
\begin{tabularx}{\linewidth}{>{\raggedright\arraybackslash}X c c c c}
\toprule
\textbf{Policy Setting} & \textbf{Platform} & \textbf{Seller} & \textbf{Buyer} & \textbf{Prod. Explored} \\
\midrule
No platform entry or fee & 0 & 865 & 2174 & 2.38\\
$T_p^* = 3$ & 2332 & 514 & 3447 & 3\\
$(T_p^*, \alpha^*) = (4,0.4)$ & 2633 & 292 & 3285 & 3\\
$\mathbf{T_p^*} = (3,3,3,8)$ & 2724 & 525 & 3967 & 4\\
\rowcolor{gray!15} Fee cap $\alpha \le 0.2$ $(T_p^*, \alpha^*) = (3,0.2)$ & 2555 & 386 & 3447 & 3\\
\bottomrule
\end{tabularx}
\end{subtable}
\hfill
\begin{subtable}[t]{0.49\textwidth}
\centering
\caption{Environment 1A3B}
\label{tab:3Risky}
\begin{tabularx}{\linewidth}{>{\raggedright\arraybackslash}X c c c c}
\toprule
\textbf{Policy Setting} & \textbf{Platform} & \textbf{Seller} & \textbf{Buyer} & \textbf{Prod. Explored} \\
\midrule
No platform entry or fee & 0 & 876 & 2510 & 2.9\\
$T_p^* = 8$ & 1814 & 551 & 3098 & 4\\
$(T_p^*, \alpha^*) = (8,0.08)$ & 1921 & 485 & 3098 & 4\\
$\mathbf{T_p^*} = (1,7,7,7)$ & 2205 & 354 & 3259 & 4\\
\rowcolor{gray!15} Entry cap $T_{p_j} \ge 5$ $\mathbf{T_p^*} = (5,8,8,8)$ & 2004 & 492 & 3217 & 4\\
\bottomrule
\end{tabularx}
\end{subtable}
\caption*{
Seller exploration is evaluated by expected products explored. 
Total buyer utility is the sum of discounted rewards from offered products, reflecting realized demand. 
Shaded rows indicate settings with a cap on transaction fees (left) or entry barrier (right).}
\end{table*}

\section{A Multi-Seller Model with Platform Entry}
\label{sec:multi_seller_model}


We now consider a platform with multiple sellers, capturing richer strategic dynamics, such as information spillover and market congestion.
Here, the assumptions required for the optimality of the Gittins index for a seller's problem no longer hold.
Exploration by a seller now generates a public signal about the associated product, affecting the beliefs and strategies of all others. 
At the same time, as multiple sellers crowd into the same product space, individual payoffs may decline. 
Thus, the problem is no longer a set of independent search problems but a complex, non-stationary game.
To study this more realisitic scenario, we extend our model and use multi-agent reinforcement learning to identify and analyze the resulting equilibria. 

We focus on the global entry policy setting, as solving for the optimal policy becomes more complex in the multi-seller case when multiple policy dimensions are involved. 
Our primary interest is to understand how sellers interact with each other under strategic platform entry.

\subsection{The Multi-Seller Markov Game}
We model the strategic interaction as a Stackelberg game, where the platform commits to a global entry policy $T_p$ and the sellers play a finite-horizon Markov Game $\mathcal{M}_{T_p}$ induced by $T_p$.

\subsubsection{The Sellers' Game.}
The game $\mathcal{M}_{T_p}$ is defined by:
\begin{enumerate}[leftmargin=*]
    \item \textbf{Agents}: 
    A set of $N$ sellers, indexed by $i \in \{1, 2, ..., N\}$, each with a discount factor of $\gamma_i$.
    \item \textbf{Products}: A set of $M$ products, indexed by $j \in \{1, 2, ..., M\}$.
    Each product has a known prior probability $p_j$ of being ``good'' (high reward $r^g_j$) or ``bad'' (low reward $r^b_j$). 
    Each seller $i$ has a one-time innovation cost of $c_{i, j}$ to explore product $j$.
    \item \textbf{State ($x_{t} \in \mathcal{X}$)}: The state at time $t$ includes 
        \begin{enumerate}
            \item The state of each product: \{undeveloped, good, bad, entered\}, denoted by $\{\sunexplore, \sgood, \sbad, \senter\}$.
            \item A $N \times M$ matrix indicating which sellers are currently offering which products.
            \item A $N \times M$ matrix tracking the time elapsed since each seller first offered each product.        \end{enumerate}
    \item \textbf{Actions ($a_t \in \mathcal{A}$)}: 
    The joint action $a_t = (a_{1,t}, \dots, a_{N,t})$ is the combination of individual seller actions, where $a_{i,t} \in \{0, 1, 2, ..., M\}$ represents seller $i$'s choice of product $j$ or nothing (i.e., 0).
    \item \textbf{Transitions ($x_{t+1} \sim \mathcal{T}(x_t, a_t)$)}:
    The state transitions based on the current state and the joint action. 
    Product states change upon seller exploration and platform entry $T_p$.
    %
    \item \textbf{Rewards ($r_{i, t}$)}:
    If seller $i$ chooses action $a_{i,t} = j$, their reward is:
    \[r_{i, t} = f_j(\hat{r}_j, n_{j,t}) - \mathcal{I}_{i, j} \cdot c_{i,j},\]
    where $\hat{r}_j$ is the realized reward of product $j$, $n_{j,t}$ is the number of sellers offering product $j$, $f_j$ is some function modeling the reward under different extent of market congestion, and $\mathcal{I}_{i, j}$ is an indicator variable that applies the cost on the first time seller $i$ offers product $j$. 
    If $a_{i,t} = 0$ or product $j$ has been entered by the platform, then $r_{i, t} = 0$.
    \item \textbf{Observations ($o_{i,t} \in \Omega$)}: The observation for seller $s_i$ at time $t$, $o_{i,t} \subset x_t$, contains every information from $x_t$ except for the private innovation cost of every other seller. 
\end{enumerate}

\subsubsection{Seller and Platform Objectives.}
Each seller $i$ chooses a policy $\pi_{i}$ to maximize their own total expected discounted reward. 
As each seller's reward depends on the actions of other sellers, the goal is to find a policy profile $\boldsymbol{\pi}^*(T_p) =(\pi^*_1, \dots, \pi^*_N)$ that is a Nash equilibrium
:
\[\pi_i^* \in \arg\max_{\pi_i} \mathbb{E}_{\pi_i, \pi_{-i}^*} \left[ \sum_{t=0}^T \gamma_i^t r_{i,t} \right], \quad \forall i \in \{1, ..., N\}.
\]


The platform's objective is to choose an entry time $T^*_p$ that maximizes its own utility, anticipating the sellers' equilibrium response $\pi^*(T_p)$. 
The platform's utility is the sum of discounted rewards from all products it has entered:
\[
u_P(T_p) = \mathbb{E}_{\pi^*(T_p)} \Bigg[\sum_{t=0}^T \sum_{j=1}^M \gamma_p^t \cdot r^g_j \cdot \mathcal{I}{\{x_{j, t} = \senter\}}\Bigg]
\]
The platform's optimization problem is thus:
\[
T_p^* = \arg\max_{T_p \ge 1} u_P(T_p).
\]

\subsection{Approximating a Seller Equilibrium}
The introduction of multiple sellers competing with each other creates a non-stationary environment, making analytical solutions (including state space and joint action) appear intractable. Therefore, we use deep reinforcement learning to model the sellers' strategic behavior. 
To address the non-stationarity of evolving seller strategies, we use an iterative best-response procedure to identify an approximate, $\epsilon$-Nash equilibrium:


\begin{enumerate}[leftmargin=*]
    \item Independent training: Sellers are trained in parallel for a set of $K$ episodes to learn policy profile, $\pi = (\pi_1, \dots, \pi_N)$.

    \item Iterative best-response: 
    For each agent, we evaluate the regret (or unilateral devation gain) by freezing the other agents' policies $\pi_{-i}$ and train a best-response policy $\pi_i^{\text{br}}$ against them, starting from initially trained $\pi_i$:
    \[
    \small
        \text{Regret}_i(\pi) = \max \left( 0, \quad \mathbb{E}_{\pi_i^{\text{br}}, \pi_{-i}} \left[ \sum_{t=0}^T \gamma^t r_{i,t} \right] - \mathbb{E}_{\pi_i, \pi_{-i}} \left[ \sum_{t=0}^T \gamma^t r_{i,t} \right] \right)
        \]

    \item Convergence check: 
    If the maximum regret across all agents falls below a threshold $\epsilon$, the policy profile is an approximate, $\epsilon$-Nash equilibrium, and the training terminates.
    Otherwise, we resume parallel training.

\end{enumerate}
While theoretical guarantees for MARL in general-sum games remain an open question, the iterative procedure enables the identification of empirically stable joint policies, corresponding to an approximate, $\epsilon$-Nash equilibrium.
To account for the possibility of multiple equilibria, we additionally execute this process with multiple random seeds to obtain a range of potential equilibrium outcomes for each $T_p$.

\section{A Simulation Study of Strategic Platform Entry in Multi-Seller Markets}
\label{sec:multiseller_exp}
We develop a {\em multi-agent Gym simulation environment} based on the multi-seller model in Section~\ref{sec:multi_seller_model} to 
examine how strategic platform entry shapes market outcomes under different seller competition structures.
%
The simulation allows us to explore a range of market configurations, facilitating counterfactual analysis and the interpretability of the market outcomes.

\subsection{Experiment Settings}

We construct simulation environments that capture key strategic forces that can influence platform entry and seller exploration, while avoiding excessive model complexity.
To this end, we focus on two canonical categories of market structures---\textit{clustered} and \textit{diverse}---which span the range of competitive conditions observed in real-world platforms. 
The clustered setting represents markets with a highly profitable or popular product space (e.g., sellers crowding into  categories like tech accessories), whereas the diverse setting captures markets characterized by differentiated niches (e.g., merchants specializing in  crafts like handmade jewelry or custom art). 

We adopt {\em two representative sellers}, capturing the simplest setting in which information spillover and competition can arise. 
Each seller can represent a group of similar sellers, allowing interpretable analysis of exploration and equilibrium under varying entry strategies with manageable computation.

\subsubsection{Market Environment Configurations.}
While we cannot use the Gittins index to derive a seller solution in settings with multiple sellers, we employ a Gittins-index-guided design to ensure that sellers' incentives align with the intended clustered or diverse market structures.
Specifically, for each sampled product reward profile, we adjust the corresponding innovation costs so that sellers' induced policies  generate the desired market scenario in the absence of platform entry (i.e., $T_p = \infty$).
For all scenarios, we introduce ``control products'' with Gittins indices below a fixed threshold, $\Bar{G}$, to serve as background alternatives, ensuring that observed exploration behaviors arise endogenously from incentive structures rather than from a lack of available options.
Below, we describe these distinct market environments and how they are generated. We denote $G_{i, j}$ as the formulation of the Gittins index of product $j$ for seller $i$.
 \begin{itemize}[leftmargin=*]
     \item \textit{Clustered Environments}\\
     There is a single popular product space $j^*$ whose shared-reward Gittins index exceeds the index of any other product $j$:
     \[G_{i, j^*}(\sunexplore; T_p=\infty) > \Bar{G} >G_{i, j}(\sunexplore;
     T_p=\infty), \quad \text{ $\forall$} i \text{ and } j\neq j^*.\]
     We consider two scenarios within this category, analyzing \textit{how cost structure and reward size shape seller incentives under clustered competition.}
     \begin{itemize}
         \item Scenario C1 (Standard): A high-demand product space. 
         \item Scenario C2 (High-stakes): A high-stakes product space with a large innovation cost but also a higher reward.
     \end{itemize}
     \item \textit{Diverse Environments}\\
        Sellers have specialized incentives, captured by their one-time innovation costs $c_{i,j}$, with each seller having a preferred product.
        We consider two scenarios, analyzing \textit{how asymmetries in seller capabilities shape strategic responses and product diversity.}
     \begin{itemize}
         \item Scenario D1 (Specialists):
         Each seller has a unique preferred product $j_i$, and faces prohibitively high costs to explore other’s niche:
         \[G_{i, j_i}(\sunexplore; T_p=\infty) > \Bar{G} > G_{i, j_{i'}}(\sunexplore; T_p=\infty), \quad \text{for } i \neq i' \text{ and } j_{i} \neq j_{i'}\]
         \item Scenario D2 (Specialist and Generalist): 
         This models a market with a specialist seller $i^*$ and a generalist seller $i$. 
         The generalist can enter the specialist's niche but still prefers their own product:
         \[G_{i^*, j_{i^*}}(\sunexplore; T_p=\infty) > \Bar{G} > G_{i^*, j_{i}}(\sunexplore; T_p=\infty) , \quad \text{for } j_{i^*} \neq j_{i}\]
         \[G_{i, j_{i}}(\sunexplore; T_p=\infty) > G_{i, j_{i^*}}(\sunexplore; T_p=\infty) > \Bar{G}, \quad \text{for } j_{i^*} \neq j_{i}\]
     \end{itemize}
 \end{itemize}

To cover a variety of risk-reward profiles, we sample product parameters from discrete sets, specifically 
$r^g_j \sim \{75, 100, 200\}$, $r^b_j \sim\{0, 25, 50\}$, and $p^g_j \sim \{0.2, 0.5, 0.8\}$.%
\footnote{In C2, we model high-stakes products by augmenting $r^g_j$ with an additional 500.}

\subsubsection{Agent Configuration.}
%
Given the large state space, we model each seller's exploration using {\em deep Q-learning (DQN)}, with best-response checks. 
Each seller trains its action–value function independently, and we iteratively compare policies to best responses against others’ fixed strategies to approximate an empirical $\epsilon$-Nash equilibrium, using a regret threshold of $\epsilon=0.33$ for convergence.

To identify the optimal platform entry time, we search over integer values of $T_p$. 
For each value, we run the MARL training procedure under multiple random seeds to find seller equilibria and compute the platform's expected revenue. 
We put all hyperparameter settings for training in Appendix \ref{app:hyperparameters}.

\subsection{Effect of Platform Entry: Empirical Analysis}
Figure~\ref{fig:clustered_results} presents our main findings on how platform entry affects different market structures, using several key metrics:
\begin{itemize}[leftmargin=*]
    \item \textit{Agent utilities}: total rewards for the platform, sellers, and consumers (i.e., rewards generated by the platform and sellers),
    \item \textit{Products explored}: the fraction of distinct products explored by sellers, measuring innovation,
    \item \textit{Product variety}: the fraction of distinct products offered per timestep, measuring product diversity,
    \item \textit{Cluster rate}: the frequency of sellers offering the same product.
\end{itemize}

For a given $T_p$ of an environment, we run a large number of simulations with different seeds on the resulting seller game under the trained DQN seller policy $\pi^*$ to capture these metrics.
In case when there are multiple equilibria, we report a range of values for these equilibrium outcome metrics.

\begin{figure*}
    \centering
    \begin{subfigure}[b]{0.43\textwidth}
        \centering

        \caption{Standard cluster (C1) Utility Metrics}
        \label{fig:c1_utility_results}
        \includegraphics[width=\textwidth]{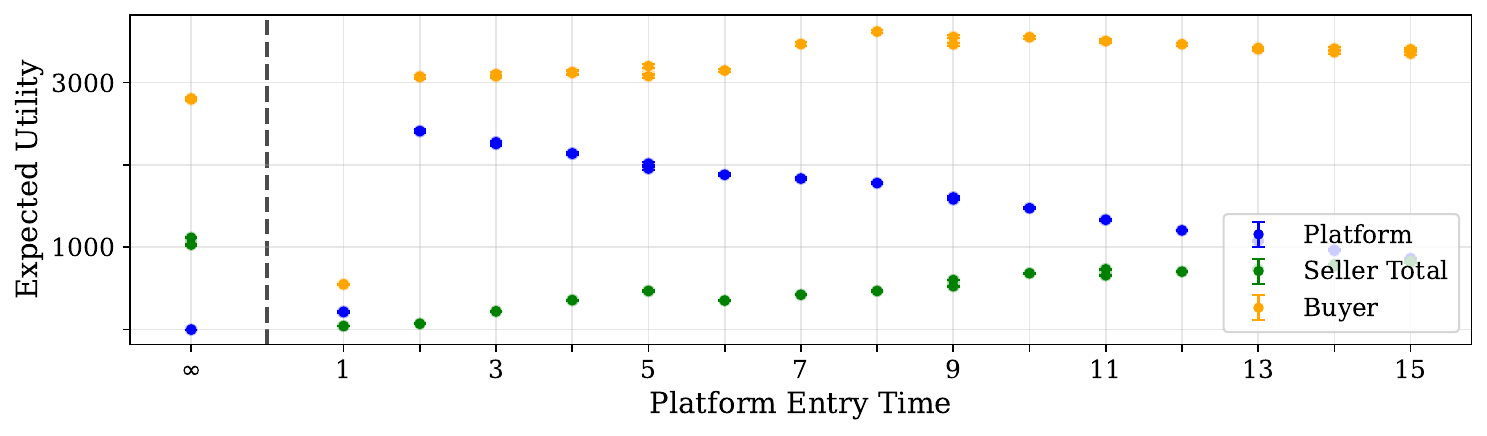}

    \end{subfigure}
    \hspace{0.75em}
    \begin{subfigure}[b]{0.43\textwidth}
        \centering
        \caption{Standard cluster (C1) Product Metrics}
        \label{fig:c1_product_results}
        \includegraphics[width=\textwidth]{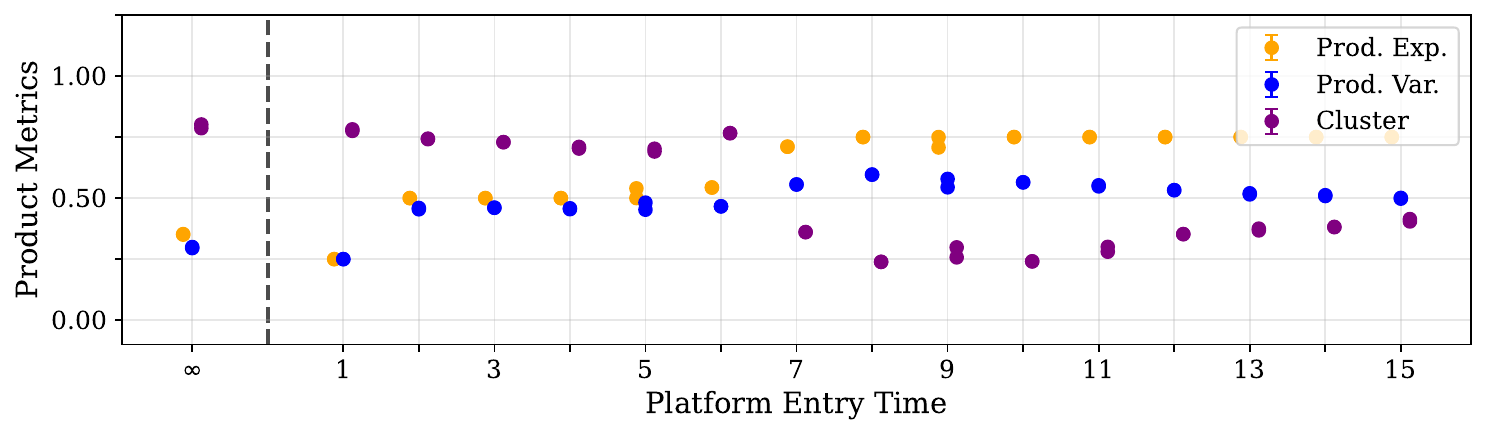}

    \end{subfigure}
    
    
    \begin{subfigure}[b]{0.43\textwidth}
        \centering
        \caption{High-stakes cluster (C2) Utility Metrics}
        \label{fig:c2_utility_results}
        \includegraphics[width=\textwidth]{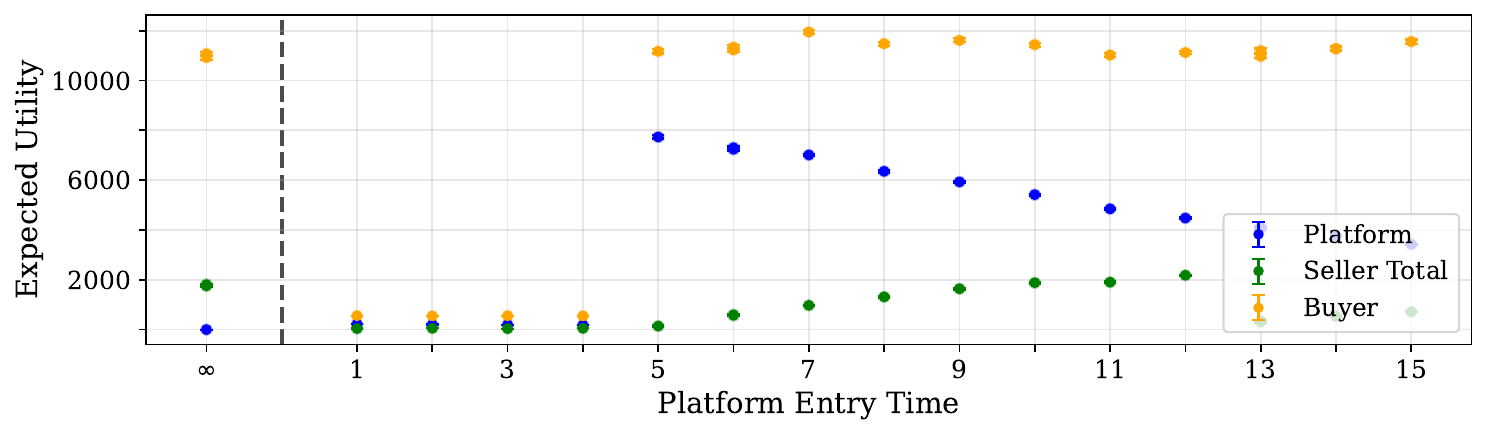}

    \end{subfigure}
    \hspace{0.75em}
    \begin{subfigure}[b]{0.43\textwidth}
        \centering
        \caption{High-stakes cluster (C2) Product Metrics}
        \label{fig:c2_product_results}
        \includegraphics[width=\textwidth]{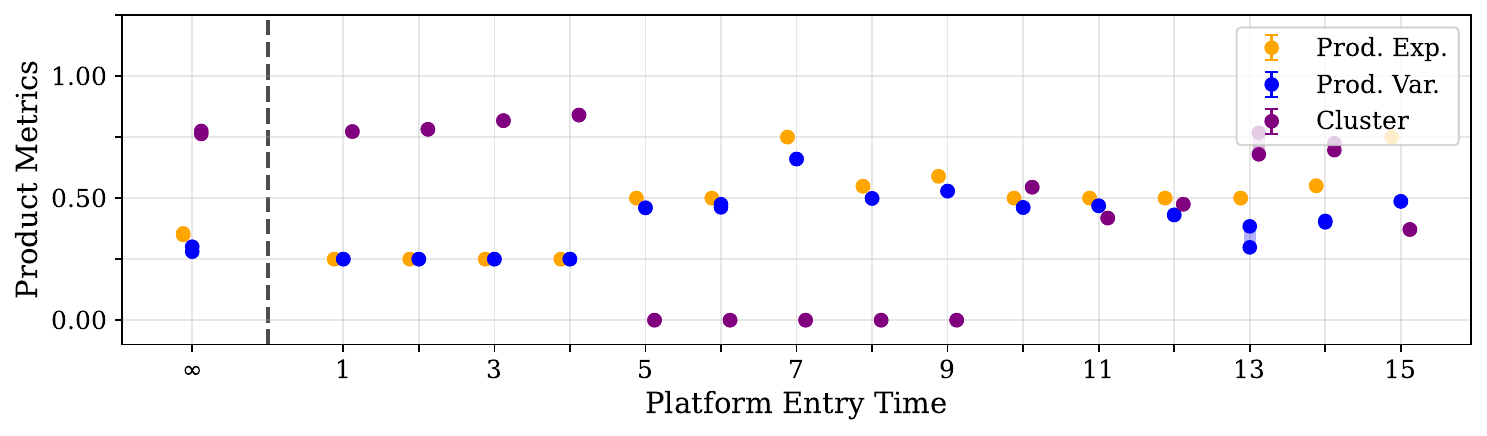}

    \end{subfigure}


    \begin{subfigure}[b]{0.43\textwidth}
        \centering
        \caption{Diverse specialists (D1) Utility Metrics}
        \label{fig:d1_utility_results}
        \includegraphics[width=\textwidth]{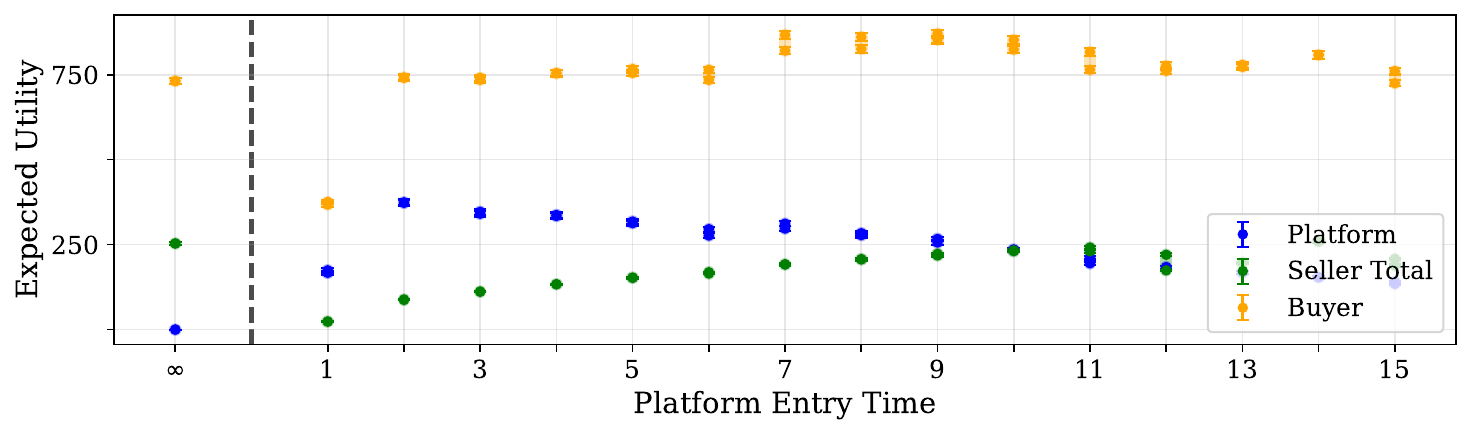}

    \end{subfigure}
    \hspace{0.75em}
    \begin{subfigure}[b]{0.43\textwidth}
        \centering
        \caption{Diverse specialists (D1) Product metrics}
        \label{fig:d1_product_results}
        \includegraphics[width=\textwidth]{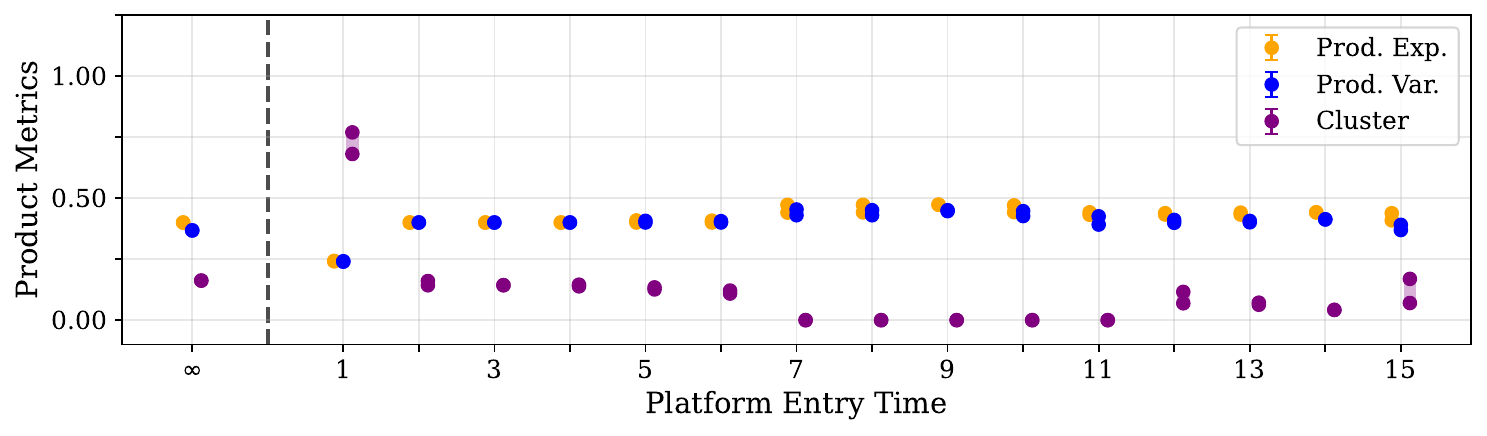}

    \end{subfigure}
    
    
    \begin{subfigure}[b]{0.43\textwidth}
        \centering
        \caption{Diverse specialist and generalist (D2) Utility Metrics}
        \label{fig:d2_utility_results}
        \includegraphics[width=\textwidth]{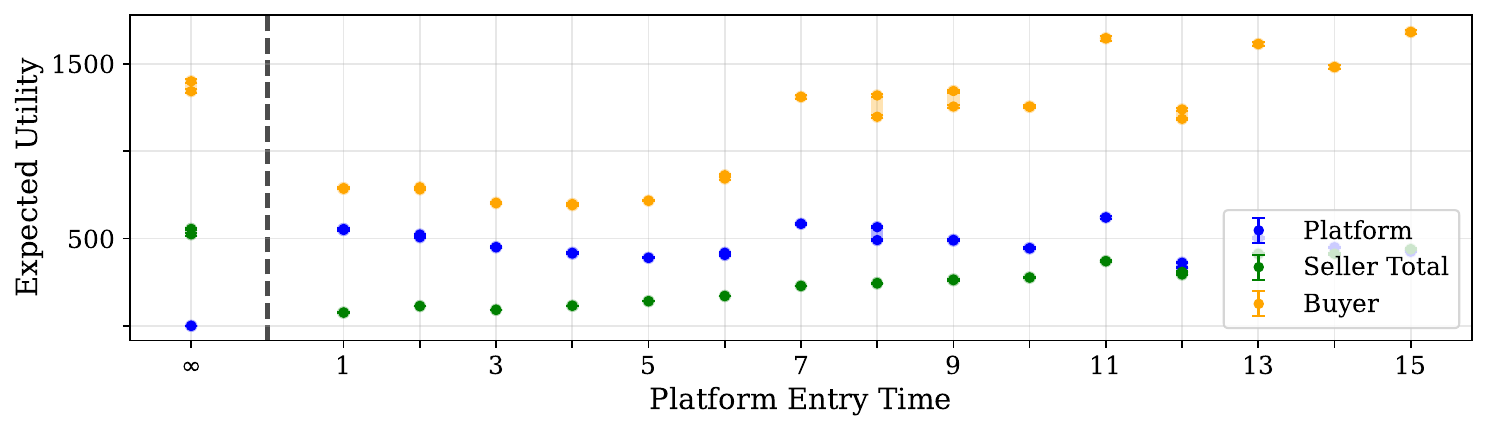}

    \end{subfigure}
    \hspace{0.75em}
    \begin{subfigure}[b]{0.43\textwidth}
        \centering
        \caption{Diverse specialist and generalist (D2) Product Metrics}
        \label{fig:d2_product_results}
        \includegraphics[width=\textwidth]{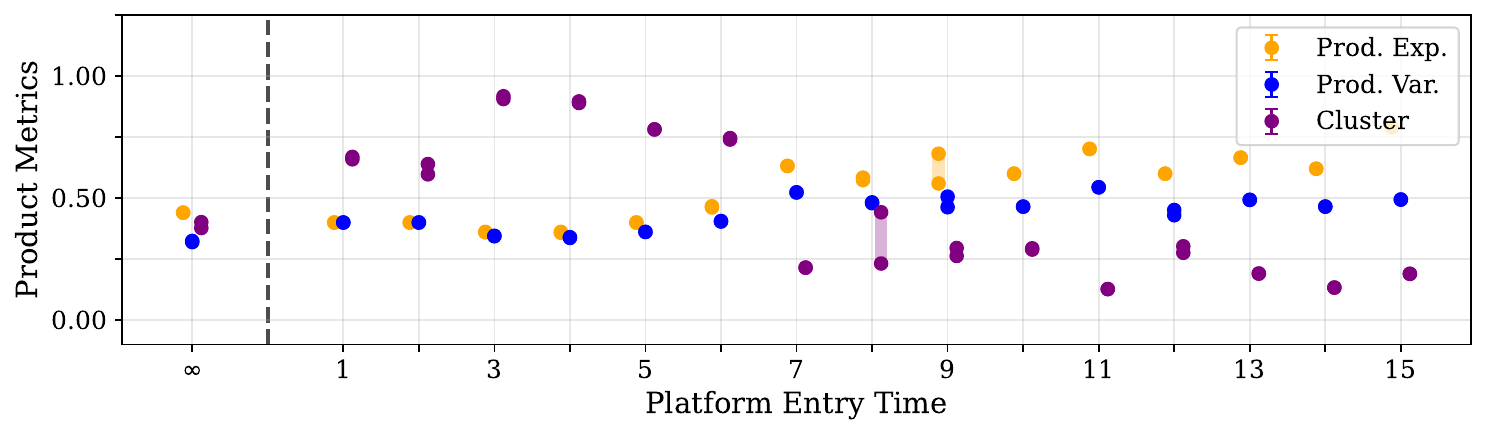}

    \end{subfigure}
    \vspace{-1.5ex}
    \caption{Expected utility metrics for the platform (blue), sellers (green), and buyers (yellow) as a function of $T_p$ on left. Products explored (yellow), product variety (blue), and cluster rate (purple) as a function of $T_p$ on right. These metrics are based on the results of 4,000 environment simulations. The range at each $T_p$ denotes outcomes from multiple equilibria. 
    In C1, optimal entry is early, $T^*_p \approx 2$, to capture value from the clustered product. 
    In C2, optimal entry is delayed, $T^*_p \approx 5$, to cover high innovation costs. 
    In D1, optimal entry, $T^*_p \approx 2$,  has a muted effect on innovation. In D2, optimal entry is delayed, $T^*_p \approx 11$. 
}
    \label{fig:clustered_results}
\end{figure*}

\subsubsection{Clustered Environments.}

 We see that a well-timed platform entry can promote seller diversification: the threat of entry on a high-demand product encourages sellers to explore alternatives.
This echoes empirical evidence~\citep{Wen19}, showing that app developers proactively shift innovation by improving their current apps or developing new apps before platform entry occurs (e.g., moving from general health apps to niche ski apps). 

In the high-stake cluster scenario C2
(Fig.~\ref{fig:c2_utility_results}),
the platform is incentivized to delay its entry until $T^*_p = 5$, capturing value from one seller exploring the high-demand product while prompting the other to innovate elsewhere. Here, the platform's objective aligns more closely with seller exploration/diversification and social welfare, as sellers use the longer protection period to justify higher costs. Early entry can discourage the development of potentially lucrative products, limiting the platform's value capture. 

In contrast, in the standard cluster scenario C1 (Fig.~\ref{fig:c1_utility_results}), the optimal entry occurs earlier ($T^*_{p} = 2$) where the sellers still cluster despite early entry. 
This aggressive entry does not fully align with social welfare or market diversity goals, as the market favors a later entry ($T^*_{sw} = 8$) to preserve seller incentives to explore riskier or lower-demand products, highlighting the potential need for regulatory intervention to discourage overly aggressive platform entry.







\subsubsection{Diverse Environments.}

Overall, early platform entry either reduces market diversity (in D2) or has little impact relative to no-entry (in D1), whereas delayed entry increases market diversity.

This effect is most pronounced in the specialist-generalist scenario D2 (Fig. \ref{fig:d2_product_results}), which mirrors real-world online markets that feature a mix of niche innovators (specialists) and established sellers (generalists) that can pivot across product spaces.
When facing early entry, the generalist seller may choose to sell clustered products preemptively, capturing short-term gains before imitation occurs, since their flexibility allows them to explore alternative products later (Fig.~\ref{fig:d2_product_results}). This mirrors  empirical evidence ~\citep{Wen19}, which shows that app developers who have a portfolio of apps unaffected by entry can shift to existing apps more easily. 
This reduces overall product exploration, diversity, and buyer utility. 
Consequently, the platform's optimal entry time is much later ($T^*_p=11$), providing a longer protection period that encourages sellers to pursue niche space and maintain market diversity (Fig.~\ref{fig:d2_utility_results}).

In D1 (specialists) markets, by contrast, sellers occupy distinct niches. 
Platform entry can occur earlier without significant disruption, mainly to extract value rather than reshape seller behavior (Fig.~\ref{fig:d1_product_results}).

\vspace{-1ex}
\section{Discussion}
\label{sec:discussion}

The optimal single-seller policy for a given platform policy $\boldsymbol{\pi_p}$, can be derived using a closed-form Gittins index. 
Given this, we solve for the optimal platform policy by optimizing over a finite set of Pareto-optimal points within regions corresponding to distinct seller strategies. 
In multi-seller settings, we use deep reinforcement learning to train seller policies and solve for approximately optimal platform entry under approximate seller equilibria.

In the single-seller model, we explore different kinds of platform policies and consider different market compositions. Higher transaction fees are observed in markets with predictable demand and low innovation costs, while lower fees are observed in innovation-driven markets with uncertain demand. However, excessive fees reduce seller profit and slow innovation, making caps on fees effective in restoring seller profits while increasing buyer utility. 
Heterogeneous platform entry across products boosts buyer utility through flexible entry-timing but can hurt seller profits. 
Imposing limits on the earliest possible platform entry balances these effects. 

With multiple sellers, we explore seller-seller and platform-sellers interaction in simulation, across different, clustered and diverse, market environments. Platforms tend to enter aggressively when products are in high demand and offered by many sellers, but choose to commit to delayed entry for products with high costs or uncertain outcomes. This aligns with findings from \citet{Zhu18} that show Amazon enters categories like toys and games but avoids high-cost categories.
More interestingly, sellers adapt even before entry occurs. Platform entry can disrupt clustered product markets, prompting some sellers to explore alternatives. Under aggressive platform entry, more versatile sellers may preemptively cluster into other sellers' products to capture short-term gains before entry. 

Overall, our findings indicate that market structure plays a large role in determining whether platform entry aligns with seller exploration. In settings where alignment occurs, social welfare, innovation, and market diversity increases. When market structure favors aggressive entry or the platform enters early timing, seller exploration can be reduced compared to social-welfare optimal entry, and regulatory interventions may be needed to restore social welfare.
\bibliographystyle{ACM-Reference-Format}
\bibliography{refs}

\appendix
\section{Gittins Index}
\label{app:gindex}
The optimality of the Gittins-index policy holds under the assumptions that each arm is an independent stochastic process with discounted rewards and is stationary, i.e., playing one arm does not influence the state or rewards of another and the state of an unplayed arm does not change over time~\citep{weber-proof}. In our setting, each ``arm" corresponds to a different product.

The index, denoted $G_{j}(x_{j})$, for an arm $j$ in a given state $x_j$, is defined as the maximal expected reward rate, where the maximization is over all possible future stopping times $\tau \geq 1$:

\[
    G_{j}(x_{j}) = \sup_{\tau \ge 1} \frac{\mathbb{E}\left[\sum_{t=0}^{\tau-1} \gamma^t R_{j}(x_{j}(t)) \mid x_{j}(0) = x_{j}\right]}{\mathbb{E}\left[\sum_{t=0}^{\tau-1} \gamma^t \mid x_{j}(0) = x_{j}\right]},
\]
where $\gamma$ is the discount factor,  $R_j(x_j(t))$ is the reward from arm $j$ at time $t$, and the expectation $\mathbb{E}[\cdot]$ is taken over the stochastic evolution of the arm's state.
\section{Deferred Analysis for Section~\ref{sec:single_seller_model}}
\subsection{Gittins Index Calculation}
\label{app:gindex_derivation_single_platform}
We derive the Gittins index under each stopping rule. We generalize the derivation to include transaction fee $\alpha$ and for $T_{p_j} = T_p$ or $T_{p_j} \in \mathbf{T_p}$. In the global and heterogeneous entry case, set $\alpha = 0$. We use $V$ to denote the total expected discounted reward under a certain stopping rule, and $D$ to denote the corresponding total expected discounted time horizon. 
\\
Under $\tau_1$, we have:
\[G_j^{(1)}(\sunexplore; T_{p_j}, \alpha)=0\]
Under $\tau_2$, we have:
\[
\begin{aligned}
V^{(2)} &= -c_j + p_j \sum_{t=0}^{T_{p_j}-1}\gamma^t (1 - \alpha)r^g_j + (1-p_j)\sum_{t=0}^{\infty}\gamma^t (1 - \alpha)r^b_j \\
&= -c_j + (1-\alpha)\bigg( p_j r^g_j \frac{1-\gamma^{T_{p_j}}}{1-\gamma} + (1-p_j) \frac{r^b_j}{1-\gamma}\bigg) \\
D^{(2)} &= p_j \sum_{t=0}^{T_{p_j}-1}\gamma^t + (1-p_j)\sum_{t=0}^{\infty}\gamma^t = p_j \frac{1-\gamma^{T_{p_j}}}{1-\gamma} + \frac{1-p_j}{1-\gamma} \\
G_j^{(2)}(\sunexplore; T_{p_j}) &= \frac{(1-\gamma)(-c_j) + (1-\alpha)\big( p_j r^g_j(1-\gamma^{T_{p_j}}) + (1-p_j)r^b_j\big)}{p_j(1-\gamma^{T_{p_j}}) + (1-p_j)}
\end{aligned}
\]
Under $\tau_3$, we have:
\[
\begin{aligned}
V^{(3)} &= -c_j + p_j \sum_{t=0}^{T_{p_j}-1}\gamma^t (1-\alpha)r^g_j + (1-p_j)(1-\alpha)r^b_j
\\
&= -c_j + (1-\alpha)\big(p_j r^g_j\frac{1-\gamma^{T_{p_j}}}{1-\gamma} + (1-p_j)r^b_j\big)
\\
D^{(3)} &= p_j \sum_{t=0}^{T_{p_j}-1}\gamma^t + (1-p_j)(\gamma^0) = p_j \frac{1-\gamma^{T_{p_j}}}{1-\gamma} + (1-p_j)\\
G_j^{(3)}(\sunexplore; T_{p_j}) &= \frac{-c_j + (1-\alpha)\big(p_j r^g_j\frac{1-\gamma^{T_{p_j}}}{1-\gamma} + (1-p_j)r^b_j\big)}{p_j \frac{1-\gamma^{T_{p_j}}}{1-\gamma} + (1-p_j)}
\end{aligned}
\]
The true Gittins index is the maximum value across all stopping rules. 
For our model, this is the maximum of the indices derived from these candidate rules.
\[
    G_j(\sunexplore; T_{p_j}) = \max \left\{ 0, G_j^{(2)}(\sunexplore; T_{p_j}), G_j^{(3)}(\sunexplore; T_{p_j}) \right\}
\]
$G_j(G; T_{p_j}) = r^g_j$, $G_j(B; T_{p_j}) = r^b_j$, and $G_j(E; T_{p_j}) = 0$ due to the persistence of the reward.

\subsection{Optimal Platform Policy Algorithm}
\label{app:platpolalg}
\begin{algorithm}[H]
\caption{Optimizing platform policy $\boldsymbol{\pi_p}$}
\begin{algorithmic}[1]
\STATE \textbf{Input:} Product profiles of $M$ products
\STATE Generate boundary set $\mathcal{B}$
\STATE Generate regions $\mathcal{R}(\mathcal{B})$
\FOR{each region $R_i$ in $\mathcal{R}(\mathcal{B})$}
    \STATE Find Pareto optimal points in $R_i$
    \STATE Select point $\boldsymbol{\pi_p^{i}}$ that maximizes $u_p(\boldsymbol{\pi_p^i 
    |R_i)}$
\ENDFOR
\STATE \textbf{Return} policy $\boldsymbol{\pi_p^*}$ among $\boldsymbol{\pi_p^{i}}$ with highest utility
\end{algorithmic}
\end{algorithm}

\subsection{Global $T_p$ Toy Example}
\label{app:globaltoyex}
\begin{figure}[H]
    \centering
    \begin{subfigure}{\columnwidth}	
	\centering
	\includegraphics[width=\columnwidth]{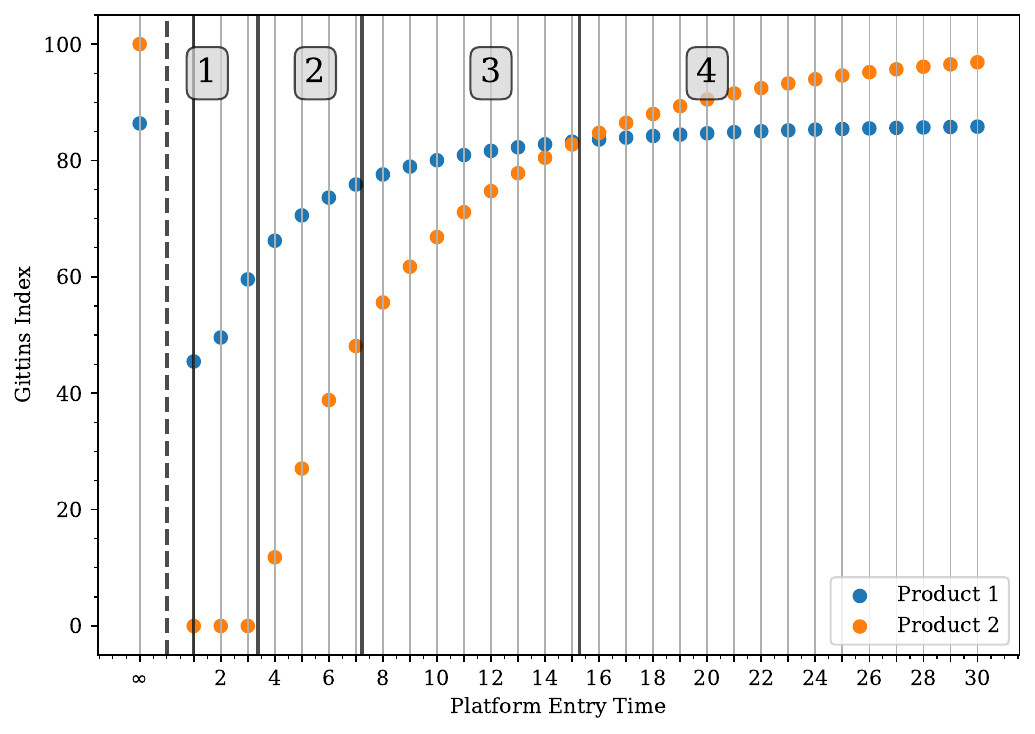}
        \vspace{-4ex}
    \end{subfigure}
    \caption{Gittins indices for Product Type A and Product Type B as a function of global $T_p$.}
    \label{fig:gittins_tp}
\end{figure}
We consider a two-product environment where product 1 is of Type A and product 2 is of Type B, as defined in Table \ref{tab:ProductTypes}. The Gittins index of product 1 and product 2 are plotted as a function of $T_p$ in Fig. \ref{fig:gittins_tp}. 

The zero boundary of product 1 is $b_{1}^0 = 1$ and the zero boundary of product 2 is $b_{2}^0 = 3.38$. The bad-indifference boundary of product 1 is $b_{1, 2}^B = 7.23$ and there is no bad-indifference boundary for product 2. The undeveloped-indifference boundary $b^U_{1, 2} = 15.27$. This gives us 4 regions, $R_1=(1, 3.38)$, $R_2=(3.38, 7.23)$, $R_3=(7.23, 15.27$), and $R_4=(15.27, \infty)$. The Pareto optimal points are 1, 4, 8, and 16 for the four regions respectively. The optimal $T_p^*$ is $T_p^* =8$.

We also characterize the seller strategy in the following regions: 
\begin{itemize}[leftmargin=*]
    \item Region 1: When $T_p$ is small, the seller will only explore A. The threat of immediate platform entry makes the riskier product B unattractive.
    \item Region 2: As $T_p$ increases, the seller will explore A first. 
    If A transitions to the good state, the seller will explore B after $T_p$ steps; otherwise, the seller will remain selling A in its bad state.
    \item Region 3: As $T_p$ increases even more (i.e., $T_p \ge 8$), the seller still explores A first, and the seller will choose to explore B immediately if A ends in the bad state.
    \item Region 4: As $T_p$ becomes large (i.e., $T_p \ge 16$), the seller explores Product B first. 
    If B transitions to the good state, the seller will explore A after $T_p$; otherwise, the seller immediately switches to A.
    Note this is the same behavior when there is no platform entry.
\end{itemize}

\subsection{Heterogeneous $\mathbf{T_p}$ Toy Example}
\label{app:heterotoyex}
\begin{figure}[H]
    \centering
    \begin{subfigure}{\columnwidth}	
	\centering	\includegraphics[width=\columnwidth]{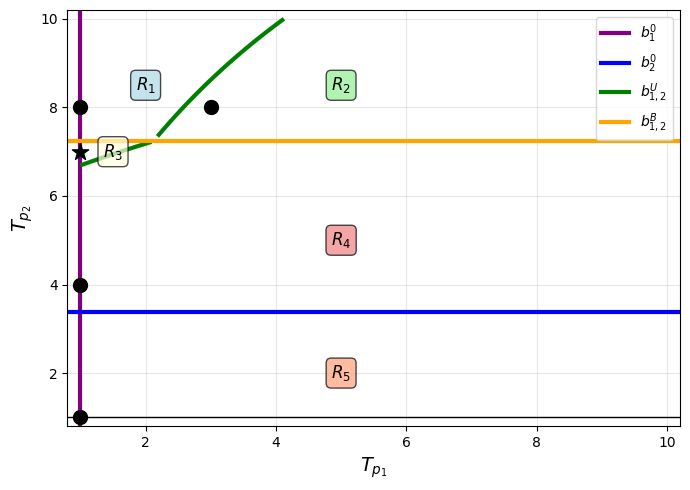}
    \end{subfigure}
    \caption{Boundaries and regions induced by Product Type A and Product Type B for heterogeneous $\mathbf{T_p}$}
    \label{fig:hetero_tp}
\end{figure}

We consider a two-product environment where product 1 is of Type A and product 2 is of Type B, as defined in Table \ref{tab:ProductTypes}. The boundaries and regions are labeled in Fig. \ref{fig:hetero_tp}. We do not need to optimize over the entire policy space for each $R_i$, but focus only on the Pareto optimal points where the platform's utility $u_p$ is monotonic decreasing with respect to $T_{p_1}$ and $T_{p_2}$. The optimal heterogeneous entry is $\mathbf{T_p^*} = (1, 7)$. 

\section{Multi-seller Environment}
\subsection{Hyperparameters }
\label{app:hyperparameters}
\renewcommand{\arraystretch}{0.9} 
\begin{table}[H]
\centering
\small

\begin{minipage}[t]{0.48\columnwidth}
\centering
\textbf{IQL - DQN} \\
\begin{tabular}{@{}lr@{}}
\toprule
Param & Value \\
\midrule
LR & 0.0001 \\
Discount factor & 0.9 \\
Batch Size & 32 \\
Buffer Size & 450000 states \\
Exploration Decay & 0.99995 \\
Exploration Factor & 1 $\rightarrow$ 0.25 \\
\bottomrule
\end{tabular}
\end{minipage}%
\hspace*{0.04\columnwidth}%
\begin{minipage}[t]{0.48\columnwidth}
\centering
\textbf{Iterative Best-Response} \\
\begin{tabular}{@{}lr@{}}
\toprule
Param & Value \\
\midrule
Exploration Decay & 0.999925 \\
Exploration Factor & 1 $\rightarrow$ 0.1 \\
Total Eps & 70000 \\
$\epsilon$ (for converge) & 0.33 \\
\bottomrule
\end{tabular}
\end{minipage}

\caption*{Hyperparameters for Training: All hyperparameters omitted from Iterative Best-Response share the same hyperparameters as IQL-DQN.}
\end{table}

\end{document}

%% file: macros.tex
\newcommand{\sunexplore}{U}
\newcommand{\sgood}{G}
\newcommand{\sbad}{B}
\newcommand{\senter}{E}